\documentclass[a4paper,UKenglish,cleveref, autoref, thm-restate]{lipics-v2021}

\pdfoutput=1 
\hideLIPIcs  


\bibliographystyle{plainurl}

\title{Decomposition and Preprocessing of Ternary Constraint Networks}

\author{Pierre Talbot}{University of Luxembourg, Luxembourg}{pierre.talbot@uni.lu}{https://orcid.org/0000-0001-9202-4541}{}

\authorrunning{P. Talbot} 

\Copyright{Pierre Talbot} 

\begin{CCSXML}
<ccs2012>
   <concept>
       <concept_id>10003752.10003790.10003795</concept_id>
       <concept_desc>Theory of computation~Constraint and logic programming</concept_desc>
       <concept_significance>500</concept_significance>
       </concept>
 </ccs2012>
\end{CCSXML}

\ccsdesc[500]{Theory of computation~Constraint and logic programming}
\keywords{Ternary constraint network, constraint programming} 

\category{} 

\relatedversion{} 




\nolinenumbers 

\usepackage{times}
\usepackage{soul}
\usepackage{url}
\usepackage{graphicx}
\usepackage{amsmath}
\usepackage{amsthm}
\usepackage{booktabs}
\usepackage{algpseudocode}

\usepackage{amssymb}
\usepackage{syntax}
\usepackage{tikz}
\usepackage{stmaryrd}
\usepackage{listings}
\usepackage{pgf}
\usepackage{multicol}

\urlstyle{same}

\lstset{
  basicstyle=\small\ttfamily,
  keywordstyle=\ttfamily\bfseries,
  stringstyle=,
  xleftmargin=1em,
  showstringspaces=false,
  commentstyle=\itshape\rmfamily,
  columns=flexible,
  keepspaces=true,
  texcl=true
}

\def\lett{\mathbf{let}~}
\def\linn{~\mathbf{in}~}

\usepackage{mathtools}

\newcommand\eqdef{\triangleq}

\newcommand{\tifq}{%
  \begin{tikzpicture}[baseline={(0,0.02)}]
    \draw (0,0) circle (0.03); 
    \draw (0,0.05) -- (0,0.1); 
    \draw (0,0.1) .. controls (0.25,0.22) and (-0.05, 0.4) .. (-0.07,0.15); 
  \end{tikzpicture}%
}

\newcommand{\tifdotdot}{%
  \begin{tikzpicture}
    \draw (0,0) circle (0.03); 
    \draw (0,0.1) circle (0.03); 
  \end{tikzpicture}%
}

\newcommand{\tifbar}{%
  \begin{tikzpicture}[baseline={(0,0.07)}]
    \draw (0,0) -- (0,0.35);
    \draw (0.05,0) -- (0.05,0.35);
    \draw (0,0) -- (0.05,0);
    \draw (0,0.35) -- (0.05,0.35);
  \end{tikzpicture}%
}

\newcommand{\tif}[3]{\llparenthesis~ #1 ~\tifq #2 ~\tifdotdot~ #3 ~\rrparenthesis}
\newcommand{\tifbegin}[2]{\llparenthesis~ #1 ~\tifq #2}
\newcommand{\tifcase}[2]{\tifbar~ #1 ~\tifq #2}
\newcommand{\tifend}[1]{\tifdotdot~ #1 ~\rrparenthesis}

\begin{document}

\maketitle

\begin{abstract}
Constraint programming is a general and exact method based on constraint propagation and backtracking search.
We provide a function decomposing a constraint network into a \textit{ternary constraint network} (TCN) with a reduced number of operators.
TCNs are not new and have been used since the inception of constraint programming, notably in constraint logic programming systems.
This work aims to specify formally the decomposition function of discrete constraint network into TCN and its preprocessing.
We aim to be self-contained and descriptive enough to serve as the basis of an implementation.
Our primary usage of TCN is to obtain a regular data layout of constraints to efficiently execute propagators on graphics processing unit (GPU) hardware.
\end{abstract}

\section{Introduction}

Constraint programming is a general and exact method based on constraint propagation and backtracking search~\cite{lecoutre-constraint-2009}.
We provide a function decomposing a constraint network into a \textit{ternary constraint network} (TCN) with a reduced number of operators (Section~\ref{ternary-constraint-network}).
TCN are not new and have been used since the inception of constraint programming, notably in constraint logic programming systems~\cite{van-hentenryck-constraint-chip-1989,codognet-compiling-1996,benhamou-real-1997}.
Our contribution is to specify the decomposition function of discrete constraint network into TCN (Section~\ref{ternary-constraint-network}).
Furthermore, as the size of the decomposition is up to 3 orders of magnitude larger than the initial constraint network, we apply several known preprocessing techniques to reduce its size (Section~\ref{preprocessing}).
Over the 100 instances of the MiniZinc 2024 challenge~\cite{stuckey-minizinc-2014}, the median increase is 4.8x in the number of variables and 4.5x in the number of constraints, although 12 instances remain more than 100x larger.
Our paper aims to be self-contained and descriptive enough to serve as the basis of an implementation.
We implemented the decomposition and preprocessing of MiniZinc constraint networks in our constraint solver Turbo~\cite{AAAI2022-Talbot}.

Our primary usage of TCN is to obtain a regular data layout of constraints to efficiently execute propagators on graphics processing unit (GPU) hardware.
Another use-case of TCN is for education.
We have used this formalism to teach constraint programming and program solvers that can be used on MiniZinc instances~\cite{nethercote-minizinc:-2007}.
TCNs are useful to describe propagation and focus on the fundamental of constraint programming.

\section{Constraint Programming}
\label{cp-background}

In the following, we consider constraint programming over integer variables only.
Let $X$ be a finite set of variables and $C$ be a finite set of constraints.
For each constraint $c \in C$, let $\mathit{scp}(c) \subseteq X$ be the set of free variables of $c$, called its \textit{scope}---for instance, $\mathit{scp}(x < y) = \{x, y\}$.
Without loss of generality, we represent the domain of variables using intervals.
Let $I = \{[\ell,u] \;|\; \ell \in \mathbb{Z} \cup \{-\infty\}, u \in \mathbb{Z} \cup \{+\infty\}, \ell \leq u\} \cup \{\bot\}$ be the set of intervals ordered by inclusion with a special element $\bot$ representing the empty interval.
We define $\mathit{lb}([\ell,u]) \eqdef \ell$ and $\mathit{ub}([\ell,u]) \eqdef u$ to extract the lower and upper bounds.
A \textit{constraint network} is a pair $P = \langle d, C \rangle$ such that $d \in X \to I$ is the \textit{domain function}.
We denote $\mathbf{D}$ the set of all domain functions $X \to I$ ordered pointwise ($d \leq d' \Leftrightarrow \forall{x \in X},~d(x) \subseteq d'(x)$).
An \textit{assignment} is a map $\mathit{asn}: X \to \mathbb{Z}$, and we denote the set of all assignments by $\mathbf{Asn}$.
The set of solutions of a constraint is given by $\mathit{rel}(c) \subseteq \mathbf{Asn}$.
The set of solutions of a constraint network is:
\begin{displaymath}
\begin{array}{l}
\mathit{sol}(d, C) \eqdef \{\mathit{asn} \in \mathbf{Asn} \;|\; \forall{c \in C},~\mathit{asn} \in \mathit{rel}(c) \land \forall{x \in X},~\mathit{asn}(x) \in d(x)\}
\end{array}
\end{displaymath}

An interval propagator is a function $p_c: \mathbf{D} \to \mathbf{D}$ where $c \in C$ is a constraint.
Let $d \in \mathbf{D}$, then a propagator is reductive ($p_c(d) \leq d$), monotone ($d \leq d' \Rightarrow p_c(d) \leq p_c(d')$) and sound ($\mathit{sol}(d, \{c\}) \subseteq \mathit{sol}(p_c(d), \{\})$).
It is also complete on singleton intervals: whenever $\forall{x \in \mathit{scp}(c)},~\exists{v \in \mathbb{Z}},~d(x) = [v,v]$, then $\mathit{sol}(d, \{c\}) \supseteq \mathit{sol}(p_c(d), \{\})$.

Constraint propagation consists in finding the greatest fixpoint of a set of propagators $\{p_1,\ldots,p_n\}$ over a domain $d \in X \to I$.
As long as the propagators are executed fairly, their order of execution does not matter and the same greatest fixpoint is always eventually reached~\cite{apt-essence-1999}.
This fact has been used to design various \textit{propagation algorithms} to accelerate the computation of the fixpoint~\cite{schulte-efficient-2008,propagation-guido-tack-2009}.
As constraint propagation is sound but incomplete in general, it must be interleaved with a search procedure.
Let $\mathit{split} \in \mathbf{D} \to \mathcal{P}(\mathbf{D})$ be a strictly reductive ($\forall{d \in \mathbf{D}},~\forall{d' \in \mathit{split}(d)},~d' < d$) and sound and complete ($\forall{d \in \mathbf{D}}, \bigcup \{\mathit{sol}(d', C) \;|\; d' \in \mathit{split}(d)\} = \mathit{sol}(d, C))$) branching procedure.
We introduce next a standard optimization algorithm based on the \textit{propagate-and-search} constraint solving algorithm.
The following algorithm finds a solution $\mathit{best} \in \mathbf{D}$ which minimizes the value of the variable $z \in X$.
\begin{algorithmic}
\Function{\textnormal{$\mathit{minimize}$}}{$d, C, \mathit{best}, z$}
\State $d(\mathit{z}) \leftarrow d(\mathit{z}) \cap [-\infty, \mathit{lb(best(z))}-1]$
\State $\mathit{propagate}(d, C)$
\If{$\mathit{isasn}(d)$}
  \State $\mathit{best} \leftarrow d$
\ElsIf{$\lnot \mathit{isbot}(d)$}
  \State $\forall{d' \in \mathit{split}(d)},~\mathit{minimize}(d', C, \mathit{best}, z)$
\EndIf
\EndFunction
\end{algorithmic}
with
\begin{itemize}
  \item $\mathit{propagate}(d, \{c_1,\ldots,c_n\})$ computes the greatest fixpoint of $p_{c_1} \circ \ldots \circ p_{c_n}$ below $d$.
  \item $\mathit{isasn}(d) \eqdef \forall{x \in X},~\exists{v \in \mathbb{Z}},~d(x) = [v,v]$
  \item $\mathit{isbot}(d) \eqdef \exists{x \in X},~d(x) = \bot$
\end{itemize}
\noindent
Here and thereafter, we always pass parameters by reference.
The function $\mathit{isasn}$ tests if $d$ maps only to interval singletons (assignment) and $\mathit{isbot}$ tests whether a variable has an empty domain, in which case we must backtrack.
It is well-known that the algorithm \texttt{minimize} is a sound and complete solving procedure, see e.g.~\cite{lecoutre-constraint-2009,propagation-guido-tack-2009}.
The result holds even in the presence of infinite intervals as long as they become finite after a finite number of propagation steps.

\paragraph*{Representation of Propagators}

It is possible to construct an infinite number of constraints, take for example the sequence $\langle x_1 = x_2, x_1 = x_2 \odot_1 x_3, x_1 = x_2 \odot_1 x_3 \odot_2 x_4, \ldots \rangle$ for a sequence of arithmetic operators $\langle \odot_1, \odot_2, \ldots\rangle$.
Therefore, we cannot implement a different propagator function for each possible constraint.
Historically, solvers have limited their constraint languages in order to avoid implementing too many different propagators~\cite{van-hentenryck-constraint-chip-1989,codognet-compiling-1996,benhamou-real-1997}.
More complex constraints must be rewritten into supported constraints automatically or by the user.
The \textit{indexicals} approach is a particularly concise way of specifying propagators by mean of a few primitive constructions~\cite{carlsson-indexicals-1997,cc-fd-impl-1998}.

However, the decomposition of constraints into primitive ones increase the number of auxiliary variables and propagators.
It has been shown to be detrimental to the solver performance~\cite{schulte-view-based-2013,correia-view-based-2013}.
Therefore, the key idea is to implement a propagator as an interpreter over the abstract syntax tree (AST) of the constraint.
The propagator recursively traverses the AST to evaluate each node and compute the domain of each subexpression of the constraint.
It is called a \textit{view-based propagator} in recent work~\cite{schulte-view-based-2013,correia-view-based-2013}, but a similar technique was already used in the HC-4 consistency algorithm~\cite{hull-box-consistency-1999}.

Modern constraint solvers such as Choco and OR-Tools implement view-based propagation using inheritance to represent the AST, and subtype polymorphism to evaluate the tree.
To avoid the overhead caused by dynamic subtyping, parametric polymorphism has also been used~\cite{correia-view-based-2013,schulte-view-based-2013}, but it has the inconvenient that the solver must be recompiled for each new constraint problem tackled.

\section{Ternary Constraint Network}
\label{ternary-constraint-network}

In this section, we rewrite any constraint network into a \textit{ternary constraint network} (TCN), that is, a constraint network with only constraints of arity 3 such as $x = y + z$ and $b = (x \leq y)$ with $\{x,y,z,b\} \subseteq X$.
\begin{definition}
A ternary constraint network $\langle d, C \rangle$ is a constraint network such that each $c \in C$ is of the form $x = y \odot z$ where $x,y,z \in X$ are variables and $\odot$ is a binary operator.
\end{definition}
\noindent
In this paper, the set of supported operators is $\odot \in \mathit{OP} = \{+,*,/,\mathit{mod},\mathit{min},\mathit{max},=,\leq\}$\footnote{We follow MiniZinc semantics by using truncated integer division and Euclidean modulus.}.
The constraint language considered is sufficient to support all instances of the 2024 MiniZinc challenge.
Unary constraints of the form $x = k$, $x \leq k$ and $x \geq k$ where $x \in X$ and $k \in \mathbb{Z}$ are directly represented as domains of the variables.
The definition of the set $\mathit{OP}$ could be different.
For instance, the constraint $x = \mathit{min}(y, z)$ can be rewritten into $-x = \mathit{max}(-y, -z)$ which, when decomposed into ternary constraints, gives $\exists{x'},\exists{y'},\exists{z'},~x' = \mathit{max}(y', z') \land x' = \mathit{zero} - x \land y' = \mathit{zero} - y \land z' = \mathit{zero} - z$, with $d(\mathit{zero}) = [0,0]$.
It allows $\mathit{OP}$ to be smaller, but it introduces three new variables and constraints.
The $\mathit{min}$ and $\mathit{max}$ are particularly important since they also encode disjunction and conjunction, and it is preferable to keep them both.
The lack of subtraction is justified by the relational semantics of constraints as we can rewrite $x = y - z$ into $y = x + z$ without loss of precision.

We now give a complete decomposition of a constraint network into a TCN, along with the proof that a TCN has exactly the same set of solutions than the initial constraint network.
Note that the global constraints are decomposed in primitive constraints before this decomposition and using the MiniZinc compiler in the experiments.

Let $\langle d \in X \to I, C \rangle$ be a constraint network.
We introduce the function $\mathit{tcn}(d, C) = \langle d', C' \rangle$ rewriting a constraint network into a ternary constraint network.
The constraint language considered for $C$ is sufficient to support all instances of the 2024 MiniZinc challenge.
The function $\mathit{tcn}$ only adds new variables and therefore we have $\mathit{dom}(d) \subseteq \mathit{dom}(d')$ where $\mathit{dom}$ is the domain of the function $d$.
For clarity, we denote by $\tif{b}{X}{Y}$ the function returning $X$ if the expression $b$ is true and $Y$ otherwise.
We also use the \textit{let expression} $\lett x = E \linn E'$ which binds the result of the evaluation of the expression $E$ to the variable $x$, and returns the evaluation of the expression $E'$ where $E'$ can use $x$.
It extends naturally to the case where $E$ returns a tuple of values, e.g. $\lett x, y = E \linn E'$.

We first define several functions to extend a constraint network with a new variable $x \notin \mathit{dom}(d)$, and to update a variable already in $d$:
\begin{displaymath}
\begin{array}{l}
\mathit{extend}(d, x, [\ell, u]) = (d \cup \{x \mapsto [\ell, u]\}, x)\\
\mathit{extend}(d, x) = \mathit{extend}(d, x, [-\infty, \infty])\\
\mathit{extend}(d) = \mathit{extend}(d, \mathit{fresh}(d))\\
\mathit{extend}_\mathbb{B}(d) = \mathit{extend}(d, \mathit{fresh}(d), [0,1])\\
\mathit{extend}_\mathit{co}(d, k) = \\
  \qquad \lett c = \mathtt{\_\_CONSTANT\_}\tif{k < 0}{\mathtt{m}k}{k} \linn \\
  \qquad \tif{c \in \mathit{dom}(d)}{(d, c)}{\mathit{extend}(d, c, [k,k])}\\[0.2cm]
\mathit{update}(d, x, [\ell, u]) = \\
\quad \{y \mapsto \tif{x=y}{[\ell, u] \cap d(y)}{d(y)} \;|\; y \in \mathit{dom}(d)\}
\end{array}
\end{displaymath}

\noindent
The function $\mathit{fresh}(d)$ returns a new variable's name $x$ such that $x \notin \mathit{dom}(d)$.
We use the function $\mathit{extend}_\mathbb{B}$ to create a new Boolean variable.
To avoid creating different variables with the same constant value, we create a unique name for each constant.
For instance, \texttt{__CONSTANT_5} is the name of the variable for the constant $5$, and \texttt{__CONSTANT_m1} for the constant $-1$.
We suppose that no variable's name in the initial constraint network is of this form.
The function $\mathit{update}(d, x, [\ell, u])$ intersects the domain $[\ell_x, u_x]$ of a variable $x$ with an interval $[\ell, u]$, where interval intersection is defined as $[\ell_x, u_x] \cap [\ell, u] \eqdef [\max \{\ell_x, \ell\}, \min \{u_x, u\}]$.

The function $\mathit{tcn}(d, C) = \langle d', C' \rangle$ rewrites each constraint to be ternary:
\begin{displaymath}
\begin{array}{l}
\mathit{tcn}(d, \{\}) = \langle d, \{\} \rangle \\[0.2cm]
\mathit{tcn}(d, \{c_1, c_2, \ldots, c_n\}) = \\
\qquad \lett d, T = \mathit{tcn}(d, \{c_2, \ldots, c_n\}) \linn \\
\qquad \lett d, U, x = \mathit{tc}(d, c_1) \linn \\
\qquad \langle \mathit{update}(d, x, [1,1]), T \cup U \rangle
\end{array}
\end{displaymath}
Each constraint $c \in C$ is rewritten into a set $T$ of TCN constraints, possibly with new variables in $d$, using the function $\mathit{tc}$.
The variable $x$ is a Boolean variable reifying the constraint $c$.
As we are in the top-level conjunction, we must set this variable to $1$ to activate the constraint which is done by updating its domain.


We introduce the recursive function $\mathit{tc}(d, t) = (d', T, x)$ which rewrites a term or constraint $t$ into a set $T$ of TCN constraints.
The result of the expression $t$ (or its reification status if it is a constraint) is stored in the variable $x$.
We first consider the base cases (variables and constants) and unary constraints (arithmetic negation $-$, set membership $\in$ and absolute value function $\mathit{abs}$).
Variables are simply returned as any variable occurring in a constraint must already be in $d$.
A new variable is created for each distinct constant $k$ using $\mathit{extend}_\mathit{co}$ defined above.
Unary constraints are rewritten using equivalent ternary constraints.
For the set membership, we need the function $\mathit{itvs}(S)$ to turn a set $S$ into a set of intervals, for instance $\mathit{itvs}(\{1,2,3,5\}) = \{[1,3], [5,5]\}$.
The rewriting strategy is always the same: we first rewrite the parameters of the function or predicate, and then assemble the results to rewrite the current expression.

\begin{displaymath}
\begin{array}{l}
\mathit{tc}(d, x) = (d, \{\}, x)\\[0.2cm]
\mathit{tc}(d, -t) = \mathit{tc}(d, 0 - t) \\[0.2cm]
\mathit{tc}(d, t \in S) = \\
\qquad \lett d, T, x = \mathit{tc}(d, t) \linn \\
\qquad \lett d = \mathit{update}(d, x, [\min S, \max S]) \linn \\
\qquad \lett d, U, y = \\
\qquad\quad\mathit{tc}(d, \bigvee_{[\ell,u] \in \mathit{itvs}(S)} (x \geq \ell \land x \leq u)) \linn \\
\qquad (d, T \cup U, y) \\[0.2cm]
\mathit{tc}(d, k) = \\
\qquad \lett d, x = \mathit{extend}_\mathit{co}(d, k) \linn \\
\qquad (d, \{\}, x) \\[0.2cm]
\mathit{tc}(d, \mathit{abs}(t)) = \\
\qquad \lett d, T, x = \mathit{tc}(d, t) \linn \\
\qquad \lett d, U, y = \mathit{tc}(d, 0 - x) \linn \\
\qquad \lett d, V, z = \mathit{tc}(d, \mathit{max}(x,y)) \linn \\
\qquad \lett d = \mathit{update}(d, z, [0, \infty]) \\
\qquad \mathit{tc}(d, T \cup U \cup V, z) \\
\end{array}
\end{displaymath}

The rewriting of binary operators $\odot \in \mathit{OP}$ is factored into a single function.
When the operator is Boolean ($\leq,=$), the result is stored into a Boolean variable, otherwise in an integer variable.
The other operators can be rewritten into expressions containing operators in $\mathit{OP}$.
In particular, as subtraction is the inverse of addition, we have $x = y - z \Leftrightarrow y = x + z$.
It is not the case of multiplication and division over integers, which is why we keep both operators.
We rewrite the operator $\neq$ by negating equality.
As we could be in a reified context, we cannot simply add the TCN constraint $\mathit{zero} = (x = y)$, which is why we delegate the rewriting to logical equivalence.

\begin{displaymath}
\begin{array}{l}
\mathit{tc}(d, t_1 \odot t_2) = \\
\quad \lett d, x = \\
\qquad \tif{\odot \in \{\leq,=\}}{\mathit{extend}_\mathbb{B}(d)}{\mathit{extend(d)}} \linn \\
\quad \lett d, T, y = \mathit{tc}(d, t_1) \linn \\
\quad \lett d, U, z = \mathit{tc}(d, t_2) \linn \\
\quad (d, T \cup U \cup \{x = y \odot z\}, x) \\[0.2cm]
\mathit{tc}(d, t_1~\neq~t_2) = \\
\qquad \lett d, \mathit{zero} = \mathit{extend}_\mathit{co}(d, 0) \linn \\
\qquad \mathit{tc}(d, \mathit{zero} \Leftrightarrow t_1 = t_2)
\end{array}
\end{displaymath}
\begin{displaymath}
\begin{array}{l}
\mathit{tc}(d, t_1~-~t_2) = \\
\qquad \lett d, x = \mathit{extend}(d) \linn \\
\qquad \lett d, T, y = \mathit{tc}(d, t_1) \linn \\
\qquad \lett d, U, z = \mathit{tc}(d, t_2) \linn \\
\qquad (d, T \cup U \cup \{y = x + z\}, x)
\end{array}
\end{displaymath}
\begin{displaymath}
\begin{array}{l}
\mathit{tc}(d, t_1~\geq~t_2) = \mathit{tc}(d, t_2~\leq~t_1) \\
\mathit{tc}(d, t_1~>~t_2) = \mathit{tc}(d, t_2~\leq~t_1 - 1) \\
\mathit{tc}(d, t_1~<~t_2) = \mathit{tc}(d, t_1~\leq~t_2 - 1) \\
\end{array}
\end{displaymath}

The last step is to compile logical connectors into TCN constraints.
Logical formulas can appear in arithmetic expressions, e.g. $(x \neq y \land x \neq z) + (y \neq z) \geq 1$, and integer variables in logical formulas, e.g. $((x+w) \lor y) = z$.
In a Boolean context, we interpret the value of an interval $[\ell,u]$ to be $\mathit{true}$ iff $0 \notin [\ell,u]$, and $\mathit{false}$ iff $\ell = u = 0$.
However, if the constraint is reified, we must map $\mathit{true}$ to $1$ and not to any value.
Because we compile $\lor$ to the $\mathit{min}$ function, and $\land$ to the $\mathit{max}$ function, we must ensure the result of logical connectors lies in the interval $[0,1]$.
The function $\mathit{booleanize}$ maps the domain of an expression occurring in a Boolean context to a Boolean variable.
As an optimization, we only map $x$ to a Boolean variable if it is not already a Boolean variable, i.e. its domain is within the interval $[0,1]$.
This function is then used to rewrite all logical connectors.
Note that we cannot directly rewrite $t_1~\mathit{xor}~t_2$ as $t_1 \neq t_2$, otherwise expressions such as $[1,1]~\mathit{xor}~[2,2]$ would be $\mathit{true}$, which according to our semantics is not correct as both $[1,1]$ and $[2,2]$ should map to $\mathit{true}$.

\begin{displaymath}
\begin{array}{l}
\mathit{booleanize}(d, t) = \\
\qquad \lett d, T, x = \mathit{tc}(d, t) \linn \\
\qquad \mathbf{if}~\lnot (d(x) \leq [0,1])~\mathbf{then}\\
\qquad\quad \lett d, U, b = \mathit{tc}(d, x \neq 0) \linn \\
\qquad\quad (d, T \cup U, b) \\
\qquad \mathbf{else}~(d, T, x) \\[0.2cm]
\mathit{tc}(d, \lnot t) = \mathit{tc}(d, t = 0) \\[0.2cm]
\mathit{tc}(d, t_1 \Rightarrow t_2) = \mathit{tc}(d, \lnot t_1 \lor t_2) \\[0.2cm]
\mathit{tc}(d, t_1 \land t_2) = \\
\qquad \lett d, b = \mathit{extend}_\mathbb{B}(d) \linn \\
\qquad \lett d, T, b_1 = \mathit{booleanize}(d, t_1) \linn \\
\qquad \lett d, U, b_2 = \mathit{booleanize}(d, t_2) \linn \\
\qquad (d, T \cup U \cup \{b = \mathit{min}(b_1,b_2)\}, b) \\[0.2cm]
\mathit{tc}(d, t_1 \lor t_2) = \\
\qquad \lett d, b = \mathit{extend}_\mathbb{B}(d) \linn \\
\qquad \lett d, T, b_1 = \mathit{booleanize}(d, t_1) \linn \\
\qquad \lett d, U, b_2 = \mathit{booleanize}(d, t_2) \linn \\
\qquad (d, T \cup U \cup \{b = \mathit{max}(b_1,b_2)\}, b) \\[0.2cm]
\mathit{tc}(d, t_1 \Leftrightarrow t_2) = \\
\qquad \lett d, b = \mathit{extend}_\mathbb{B}(d) \linn \\
\qquad \lett d, T, b_1 = \mathit{booleanize}(d, t_1) \linn \\
\qquad \lett d, U, b_2 = \mathit{booleanize}(d, t_2) \linn \\
\qquad (d, T \cup U \cup \{b = (b_1 = b_2)\}, b) \\[0.2cm]
\mathit{tc}(d, t_1~\mathit{xor}~t_2) = \\
\qquad \lett d, T, b_1 = \mathit{booleanize}(d, t_1) \linn \\
\qquad \lett d, U, b_2 = \mathit{booleanize}(d, t_2) \linn \\
\qquad \lett d, V, b = \mathit{tc}(d, b_1 \neq b_2) \linn \\
\qquad (d, T \cup U \cup V, b) \\[0.2cm]
\end{array}
\end{displaymath}

We now prove that every constraint network $\langle d, C \rangle$ is equivalent to its rewriting $\mathit{tcn}(d, C)$, i.e., it has exactly the same set of solutions.
Because we introduce new variables, we must restrict the solutions to those defined on the initial variables.
Let $\mathit{asn}: X \to \mathbb{Z}$ be an assignment and $Y$ be a subset of $X$, its restriction $\mathit{asn}|_Y$ is defined as $\{y \mapsto \mathit{asn}(y) \;|\; y \in Y\}$.
We extend the restriction to set of assignments $A$ as follows: $A|_Y = \{\mathit{asn}|_Y \;|\; \mathit{asn} \in A\}$.
We prove the following statements by structural induction on the formula, showing that at each step the set of solutions is preserved.

\begin{proposition}[Soundness and completeness]
Let $\langle d, C \rangle$ be a constraint network defined over the set of variables $X$, then $\mathit{sol}(d, C) = \mathit{sol}(\mathit{tcn}(d, C))|_X$.
\end{proposition}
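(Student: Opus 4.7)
The plan is to prove the statement by structural induction, with the real work concentrated in a strengthened invariant for the helper function $\mathit{tc}$. The outer induction on $|C|$ is straightforward because $\mathit{tcn}$ processes one constraint at a time; the real content lies in characterising what $\mathit{tc}(d, t)$ yields for an arbitrary term or formula $t$, in a way that composes across the arithmetic, Boolean, and reified contexts. Before stating the invariant I would introduce an auxiliary semantic function $\mathit{sem}(t, \mathit{asn}) \in \mathbb{Z}$ mirroring the evaluation rules used by the decomposition: for arithmetic $t$ it is the usual integer value, while for logical $t$ it is $1$ or $0$ under the truth-on-intervals convention given before the definition of $\mathit{booleanize}$ (true iff $0 \notin [\ell, u]$, false iff $\ell = u = 0$).

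\textbf{Helper lemma.} I would then prove by induction on the syntactic size of $t$ that if $(d', T, x) = \mathit{tc}(d, t)$, then (i) $\mathit{dom}(d) \subseteq \mathit{dom}(d')$ and every new variable is either fresh or a canonical constant, so in particular disjoint from the original $X$; (ii) for every assignment $\mathit{asn}$ on $\mathit{dom}(d)$ compatible with $d$, there is a unique extension $\mathit{asn}'$ to $\mathit{dom}(d')$ compatible with $d'$ and satisfying every constraint in $T$, and this extension satisfies $\mathit{asn}'(x) = \mathit{sem}(t, \mathit{asn})$; (iii) conversely, any assignment compatible with $d'$ satisfying $T$ restricts to one compatible with $d$. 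The inductive step is a case analysis following each clause of $\mathit{tc}$: base cases for variables and constants are immediate from $\mathit{extend}_\mathit{co}$; the arithmetic clause $t_1 \odot t_2$ uses the induction hypothesis to obtain $y, z$ holding $\mathit{sem}(t_i, \mathit{asn})$, after which $\{x = y \odot z\}$ forces $x = \mathit{sem}(t_1 \odot t_2, \mathit{asn})$; the Boolean clauses translate $\land$ and $\lor$ to $\min$ and $\max$ on $\{0,1\}$-valued operands produced by $\mathit{booleanize}$; and derived clauses such as $-t$, $t_1 - t_2$, $\mathit{abs}$, $\neq$, $<$, $>$, $\Rightarrow$ and $\mathit{xor}$ reduce to previously handled cases, the invariant propagating through the reduction.

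\textbf{Finishing the outer induction.} The proposition then follows by induction on $|C|$. The case $C = \emptyset$ is immediate since $\mathit{tcn}(d, \emptyset) = \langle d, \emptyset \rangle$. For the inductive step, let $\langle d_1, T \rangle = \mathit{tcn}(d, \{c_2, \ldots, c_n\})$, which by the outer hypothesis satisfies $\mathit{sol}(d_1, T)|_X = \mathit{sol}(d, \{c_2, \ldots, c_n\})$. Applying $\mathit{tc}(d_1, c_1) = (d_2, U, x)$ and then intersecting the domain of $x$ with $[1,1]$, the helper lemma tells us that an assignment compatible with $d_1$ satisfies $c_1$ iff $\mathit{sem}(c_1, \mathit{asn}) = 1$ iff its unique extension satisfies $U$ together with $x = 1$. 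Because the new variables introduced by $\mathit{tc}(d_1, c_1)$ are disjoint from $X$, restriction to $X$ commutes with this extension step, and combining the two equalities yields $\mathit{sol}(d_2, T \cup U)|_X = \mathit{sol}(d, C)$, as required.

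\textbf{Main obstacle.} The delicate part is the Boolean/integer interface. Naively one would identify Boolean subformulas with the integers $0$ and $1$, but the paper explicitly warns that $[1,1]~\mathit{xor}~[2,2]$ must evaluate to false under the truth-on-intervals semantics, whereas $1 \neq 2$ would translate it to true. So $\mathit{sem}$ must adhere to that convention, and $\mathit{booleanize}$ must be invoked exactly where an integer-valued operand is reinterpreted as Boolean. Verifying that every Boolean-context operand in $\mathit{tc}$ either passes through $\mathit{booleanize}$ or is already guaranteed to lie in $[0,1]$, and that the induction hypothesis can be applied cleanly after such a normalisation, is where the proof is most at risk. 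Secondary care is needed around partial operations such as division and modulus, where $\mathit{sem}$ is only defined when the divisor is nonzero; the same partiality is shared by the ternary constraint $x = y / z$, so the correspondence still holds, but this must be stated explicitly in the invariant.
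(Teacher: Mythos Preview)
Your proposal is correct and follows essentially the same skeleton as the paper: a structural induction on the term $t$ establishing an invariant for $\mathit{tc}$, followed by the obvious outer argument over $C$. The paper phrases the inner invariant as the equality $\mathit{sol}(d_1, T)|_{X\cup\{x\}} = \mathit{sol}(d \cup \{x \mapsto [-\infty,\infty]\}, \{x = t\})$ and discharges each case by citing a logical equivalence, whereas you introduce an explicit evaluation function $\mathit{sem}$ and state the invariant as ``the unique extension sends $x$ to $\mathit{sem}(t,\mathit{asn})$''. The two formulations are interchangeable, but yours is tighter in two respects: it already contains the uniqueness claim that the paper proves separately as a second proposition, and your explicit treatment of the Boolean/integer coercion via $\mathit{booleanize}$ and of partiality for $/$ and $\mathit{mod}$ spells out exactly the places where the paper's equivalence-listing proof is informal.
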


\begin{proof}
The function $\mathit{tcn}$ applies recursively $\mathit{tc}$ on each constraint.
Therefore, we first prove the soundness and completeness of $\mathit{tc}$.
Let $d_1, T, x = \mathit{tc}(d, t)$ and $Y = X \cup \{x\}$, the induction hypothesis is:
\begin{displaymath}
\mathit{sol}(d_1, T)|_Y = \mathit{sol}(d_2, \{x = t\}) \text{ with } d_2 = d \cup \{x \mapsto [-\infty, \infty]\}
\end{displaymath}
If $t$ is a top-level constraint, then $\mathit{tcn}$ sets the domain of $x$ to $1$, thus we fall back on the equality $\mathit{sol}(d_1, T \cup \{x = 1\})|_Y = \mathit{sol}(d, \{c\})$ which holds since for any constraint $c$, we have $c \Leftrightarrow (x \Leftrightarrow c \land x = 1)$.
Further, because $\mathit{sol}(d, \{c_1\}) \cap \mathit{sol}(d, \{c_2\}) = \mathit{sol}(d, \{c_1, c_2\})$, the equality holds for any constraint network $\langle d, C \rangle$.

We must prove the induction hypothesis holds for the function $\mathit{tc}$.
When appropriate, we prove the induction hypothesis by showing a logical equivalence $x = t \Leftrightarrow x = t'$, in which case we must have, by the induction hypothesis on $t$ and $t'$, the equality $\mathit{sol}(d_1, T)|_Y = \mathit{sol}(d_2, T')|_Y$ where $d_1, T, x = \mathit{tc}(d, t)$ and $d_2, T', x = \mathit{tc}(d, t')$.
\begin{itemize}
  \item Variable: the domain $d_1$ is unchanged, hence $d_1 = d_2$. The set of constraints is empty and $x = x$ is a tautology. Therefore, $\mathit{sol}(d_1, T)|_Y = \mathit{sol}(d_1, \{\})|_Y = \mathit{sol}(d_2, \{x = x\})|_Y$.
  \item Constant: by definition we have $x \in \mathit{dom}(d_2)$ and $d_1(x) = [k,k]$.
  For any constraint network $\langle d, C \rangle$ with $x \in \mathit{dom}(d)$, we have the equality $\mathit{sol}(\mathit{update}(d, x, [k,k]), C) = \mathit{sol}(d, C \cup \{x = k\})$.
  \item Negation: by logical equivalence $x = -t \Leftrightarrow x = 0 - t$.
  \item Membership: by logical equivalence of $t \in S \Leftrightarrow \bigvee_{v\in S} t = v$ and $t = v \lor t = v+1 \lor \ldots \lor t = v+n \Leftrightarrow t \geq v \land t \leq v+n$.
  Furthermore, the update of $d(x)$ with $[\min S, \max S]$ is implied by the membership constraint and therefore does not modify the set of solutions.
  \item Absolute value function: by logical equivalence $x = \mathit{abs}(t) \Leftrightarrow x = \mathit{max}(t, -t)$. Furthermore, the update of $d(x)$ with $[0, \infty]$ does not modify the set of solutions as it is implied by the tautology $\mathit{max}(t, -t) \geq 0$.
  \item TCN ternary constraint: by logical equivalence $x = t_1 \odot t_2 \Leftrightarrow x = y \odot z \land y = t_1 \land z = t_2$.
  \item Subtraction: by logical equivalence $x = t_1 - t_2 \Leftrightarrow t_1 = x + t_2$.
  \item Inequalities: by logical equivalences $x = t_1 \geq t_2 \Leftrightarrow x = t_2 \leq t_1$, $x = t_1 > t_2 \Leftrightarrow x = t_2 \leq t_1 - 1$, $x = t_1 < t_2 \Leftrightarrow x = t_1 \leq t_2 - 1$, and $x = t_1 \neq t_2 \Leftrightarrow x = (0 \Leftrightarrow (t_1 = t_2))$.
  \item Logical connectors: we first notice that logical operators are functions $\square \in \mathbb{B}^n \to \mathbb{B}$ where $n$ is the arity of the logical operator $\square$.
  The conversion from an integer value to a Boolean value is done implicitly.
  However, this conversion must be encoded explicitly when decomposing the constraints because we use arithmetic operators $\odot \in \mathbb{Z}^n \to \mathbb{Z}$ to encode the logical operators, which does not have the same domains than their logical equivalent $\square$.
  We restrict the parameter to Boolean value by converting them in $\mathit{booleanize}$.
  The conversion of $y$ to $b_1$ is given by $b_1 = 0 \Leftrightarrow y = 0$ and $b_1 = 1 \Leftrightarrow y \neq 0$, and similarly $z$ to $b_2$.
  Then we have the following equivalences:
  \begin{itemize}
    \item Logical negation: $x = \lnot t_1 \Leftrightarrow (x = (b_1 = 0))$.
    \item Conjunction: $(x = t_1 \land t_2) \Leftrightarrow (x = \max(b_1, b_2))$, and $\forall{a,b \in \mathbb{B}},~\max(a,b) \in \mathbb{B}$.
    \item Disjunction: $(x = t_1 \lor t_2) \Leftrightarrow (x = \min(b_1, b_2))$, and $\forall{a,b \in \mathbb{B}},~\min(a,b) \in \mathbb{B}$.
    \item Implication: $(x = t_1 \Rightarrow t_2) \Leftrightarrow (x = \lnot t_1 \lor t_2)$.
    \item Equivalence: $(x = t_1 \Leftrightarrow t_2) \Leftrightarrow (x = (b_1 = b_2))$.
    \item Exclusive disjunction: $(x = t_1~\mathit{xor}~t_2) \Leftrightarrow (x = (b_1 \neq b_2))$.
  \end{itemize}
\end{itemize}
\end{proof}

\begin{proposition}[Uniqueness]
Further, we have a bijection between both set of solutions: $|\mathit{sol}(d, C)| = |\mathit{sol}(\mathit{tcn}(d, C))|$.
\end{proposition}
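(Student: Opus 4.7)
The plan is to show that the restriction map $\rho : \mathit{sol}(\mathit{tcn}(d,C)) \to \mathit{sol}(d,C)$, defined by $\rho(\mathit{asn}) = \mathit{asn}|_X$, is a bijection. Surjectivity is already essentially provided by the previous proposition (every solution of $C$ extends to at least one solution of the TCN). What remains is injectivity, i.e. that two TCN solutions agreeing on $X$ must coincide. This will follow once we show that every variable introduced by the decomposition is \emph{functionally determined} by the variables of $X$ through the TCN constraints produced.

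I would strengthen the induction hypothesis used for the previous proposition to a functional statement: for every recursive call $\mathit{tc}(d,t) = (d',T,x)$ and every assignment of $\mathrm{dom}(d)$ that satisfies all constraints generated so far, there is at most one extension to $\mathrm{dom}(d')$ satisfying $T$, and in that extension $\mathit{asn}(x)$ equals the evaluation of $t$ under $\mathit{asn}$. The induction then proceeds case by case, mirroring the proof of the soundness proposition. For a variable leaf no new variable is introduced; for a constant $k$, either an existing constant variable already pinned to $[k,k]$ is reused, or a fresh one is created with domain $[k,k]$ by $\mathit{extend}_{\mathit{co}}$, so its value is forced to $k$. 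For a ternary node $x = y \odot z$, the induction hypothesis fixes $y$ and $z$ uniquely, hence $x$ is forced by the ternary constraint. For $\mathit{booleanize}$, the fresh Boolean $b$ is constrained by $b = (y \neq 0)$, which is a total function from $\mathbb{Z}$ to $\{0,1\}$, so $b$ is uniquely determined by $y$. The logical connectors then combine two functionally determined Booleans $b_1, b_2$ through $\min$, $\max$, or $=$, each of which is a function. Finally, the top-level $\mathit{update}(d,x,[1,1])$ forces the reification variable $x$ of each top-level constraint to $1$, so no choice remains on these either.

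The main subtlety to address is partial operators: division and modulus are undefined when the divisor is zero. In such cases there is no integer $x$ satisfying $x = y/0$, so the TCN constraint has no solution and the uniqueness claim is vacuously true on that branch; correspondingly $\mathit{sol}(d,C)$ contains no solution assigning a zero divisor either, so the previous proposition's equality is not perturbed. A second point to verify is that reused constant variables are globally consistent: because $\mathit{extend}_{\mathit{co}}$ always names the constant $k$ by the canonical identifier \texttt{\_\_CONSTANT\_}$k$ and initialises its domain to $[k,k]$ on first creation, every subsequent reuse refers to the same variable already pinned to $k$, so functional determination is preserved across calls.

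Combining the per-call uniqueness with the top-level fixing of reification variables yields injectivity of $\rho$. Together with surjectivity this gives a bijection between $\mathit{sol}(d,C)$ and $\mathit{sol}(\mathit{tcn}(d,C))$, hence equality of their cardinalities. The hardest part is bookkeeping: the induction must be phrased carefully enough that the domain $d$ threaded through successive recursive calls, together with the accumulated constraints $T$, really does uniquely extend any partial assignment — rather than merely restricting set-equality as in the soundness proof.
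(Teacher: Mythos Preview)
Your proposal is correct and follows essentially the same approach as the paper: a structural induction over the cases of $\mathit{tc}$ showing that every auxiliary variable is functionally determined by the original variables, hence the restriction map is injective. Your write-up is somewhat more careful than the paper's---you make the bijection framing explicit, separate out the $\mathit{booleanize}$ case, and address constant-variable reuse and the threading of $d$---but the underlying argument is the same.
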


\begin{proof}
Let $s_1,s_2$ be two solutions in $\mathit{sol}(\mathit{tcn}(d, C))$ and $s$ be a solution in $\mathit{sol}(d, C)$.
Suppose that $s_1|_X = s$ and $s_2|_X = s$.
It means that $s_1$ and $s_2$ differ on the value of at least an auxiliary variable $x$, or are the same.
We show by induction that the variables in $\mathit{dom}(s)$ fully define the auxiliary variables, and therefore $s_1$ and $s_2$ must be equal.
Our induction hypothesis is that the variable returned by $\mathit{tc}$ is fully defined.
\begin{itemize}
  \item Variable: no auxiliary variable is created.
  \item Constant: the auxiliary variable is created with a single value.
  \item Ternary constraint $x = y \odot z$: by induction hypothesis, $y$ and $z$ are fully defined.
  Since all $\odot$ are functions, it is necessary that the auxiliary variable $x$ is fully defined whenever $y$ and $z$ are.
  Note that division is fully defined over the intervals as dividing by zero maps to the empty interval.
  \item Subtraction $y = x + z$: similarly, $x$ must be fully defined whenever $y$ and $z$ are since the inverse of subtraction is a function (addition).
  \item Logical operators $b = b_1\,\square\,b_2$: new Boolean variables are only created for the conjunction, disjunction and equivalence. By induction hypothesis, $b_1$ and $b_2$ are fully defined. The operators $\min, \max$ and $=$ are functions, hence $b$ must be fully defined as well.
\end{itemize}
\end{proof}

\section{Preprocessing of Ternary Constraint Network}
\label{preprocessing}

We follow standard preprocessing techniques that we specialize to ternary constraint networks~\cite{rendl-phd-2010,nightingale-automatically-2017}. 
The goal of preprocessing is essentially to remove variables and constraints.
Before presenting our preprocessing algorithm, we define a structure to keep track of equivalent variables.

We aim at finding a partition $E$ of $X$ such that each component $Y$ of $E$ represents a set of equivalent variables.
More precisely, all pairs of variables $x, y \in Y$ are connected by an equality constraint $x = y$.
We write $[x]_E \in E$ the equivalence class of $x$ in $E$.
We suppose variables are totally ordered (e.g. by an indexing) and we choose $\min Y$ to be the representative variable of the equivalence class $Y$.

The equivalence classes are discovered by various preprocessing techniques.
Initially, we suppose the variables are all distinct which is given by the partition $\mathit{init}(X) \eqdef \{\{x\} \;|\; x \in X\}$.
We add a variable equality $x = y$ by removing both equivalence classes $[x]_E$ and $[y]_E$ from $E$ and adding back their union into the partition.
\begin{displaymath}
\begin{array}{l}
\mathit{merge}(E, x, y) \eqdef \\
 \lett XY = \{[x]_E\} \cup \{[y]_E\} \linn
 (E \setminus XY) \cup \{\bigcup XY\}
\end{array}
\end{displaymath}
The interval domain of a variable $x \in X$ in an equivalence class $Y$ is the intersection of all variables' domains in $Y$, defined as $d_E(x) \eqdef \bigcap_{y \in [x]_E} d(y)$.
During preprocessing, we read the domain of a variable using $d_E$ instead of $d$.
In practice, we implement the partition efficiently using a union-find data structure.

We apply seven preprocessing functions:
\begin{itemize}
  \item \textit{Propagation} to reduce the domains of the variables.
  \item \textit{Algebraic simplification} to eliminate constraints for which the solutions can be expressed directly in the domains of the variables. It also detects equivalence between variables and refine the partition $E$.
  \item \textit{Common subexpression elimination} detects ternary constraints with the same right-hand side. For instance, consider $a = y + z$ and $b = y + z$, the procedure eliminates one of the two constraints and merge the equivalence classes of $a$ and $b$.
  \item \textit{Merge domains} of the variables in the equivalence classes. It is especially useful for propagation which is not aware of the equivalence classes.
  \item \textit{Entailed constraint elimination} to remove the constraints that are entailed by $d$.
  For instance, if $d(x) = [1,2]$ and $d(y) = [2,3]$, then the ternary constraint $1 = (x \leq y)$ is always true regardless of the evolution of $d$.
  \item \textit{Variable renaming} to use a unique variable per equivalence class.
  \item \textit{Useless variable elimination} to remove variables not in the scope of any constraint. We keep the variables with an empty domain to be able to detect unsatisfiability.
\end{itemize}

The preprocessing steps are combined in the computation of a greatest fixpoint over the triple $\langle d, C, E \rangle$.
We stop once $d$ and $E$ do not change anymore, and we ignore the modifications on $C$.
To be able to define a greatest fixpoint, we must define a partial order on partitions.
Let $E, E'$ be two partitions such that $\bigcup E = \bigcup E'$, then we define the following partial order: $E \leq E' \Leftrightarrow \forall{Y \in E},~\exists{Z \in E'},~Y \supseteq Z$.
The intuition is that we only merge equivalence classes and never divide an existing one while preprocessing.
We ignore the change on $C$ because, by inspecting the places where we rewrite constraints, the rewritten constraint cannot trigger further rewriting without a change in $d$ or $E$.
Therefore, if $d$ and $E$ do not change, all constraints that can be rewritten must have been rewritten already.

The detection of entailed constraints and useless variables, as well as variable renaming, are extracted outside of the fixpoint loop.
Indeed, regardless of the modifications on $d$ and $C$, an entailed constraint will stay entailed, and a variable not occurring in the scope of any constraint will remain useless.
Formally, preprocessing is defined by the following two algorithms.

\begin{displaymath}
\begin{array}{l}
\mathit{preprocess}(d, C) \eqdef \\
\quad \lett E = \mathit{init}(\mathit{dom}(d)) \linn \\
\qquad\qquad \qquad\qquad \qquad\qquad \hfill \text{(\textit{initialize equivalence classes})}\\
\quad \lett \langle d, C, E \rangle = \mathbf{gfp}_{\langle d, C, E\rangle}~\mathit{preprocess}\\
\hfill \text{(\textit{see overloaded def. below})} \\
\quad \lett C = C \setminus \{c \in C \;|\; \gamma(d) \subseteq \mathit{rel}(c)\} \linn \\
\hfill \text{(\textit{eliminate entailed constraints})} \\
\quad \lett R = \{x \mapsto \min~[x]_E \;|\; x \in \mathit{dom}(X)\} \linn \\
\hfill \text{(\textit{create a substitution})}\\
\quad \lett C = \{c[R] \;|\; c \in C\} \linn \\
\hfill \text{(\textit{rename variables by their representative elements})}\\
\quad \lett d = \{x \mapsto d(x) \;|\; x \in \bigcup_{c \in C} \mathit{scp}(c) \lor d(x) = \bot\} \linn \\
\hfill \text{(\textit{eliminate useless variables})} \\
\quad \langle d, C \rangle
\end{array}
\end{displaymath}

\begin{displaymath}
\begin{array}{l}
\mathit{preprocess}(d, C, E) \eqdef \\
\quad \lett d = \mathbf{gfp}_d~p_{c_1} \circ \ldots \circ p_{c_n} \linn \\
 \qquad\qquad\qquad\qquad\qquad\qquad\qquad\quad \hfill \text{(\textit{root propagation})}\\
\quad \lett \langle d, C, E \rangle = \mathit{as}(d, C, E) \linn \\
\hfill \text{(\textit{algebraic simplification})} \\
\quad \lett \langle d, C, E \rangle = \mathbf{gfp}_{\langle d, C, E \rangle}~\mathit{icse} \linn \\
\hfill \text{(\textit{common subexpression elimination})} \\
\quad \lett d = \{x \mapsto \mathit{dom}(E, d, x) \;|\; x \in X\} \linn \\
\hfill \text{(\textit{merge domains of equivalent variables})} \\
\quad (\langle d, C \rangle, E)
\end{array}
\end{displaymath}
\noindent
Note that in practice, we must save the partition $E$ and eliminated variables in order to print the solutions with the original variables of the model, but this poses no particular challenge.

We give an example of the algorithm on the constraint network $\langle d, \{x = y + z, w = y + z, x = (y = z)\} \rangle$ with $d(x) = [0,1]$, $d(w) = [1,2]$ and all other variables with domain $[-\infty, \infty]$.
Propagation is unable to remove any value from the domains and $\mathit{as}$ does not detect any equality constraint since $x$ could be equal to $0$.
Then, $\mathit{icse}$ detects the equality $x = w$, modifies the partition to $\{\{x,w\}, \{y\}, \{z\}\}$ and removes the constraint $x = (y = z)$.
The propagation is still inefficient, but this time $\mathit{as}$ is able to detect the equality $y = z$ since $d_E(x) = d(x) \cap d(w) = [1,1]$, and the partition is updated to $\{\{x,w\}, \{y,z\}\}$.
At the next iteration, $\mathit{as}$ rewrites the constraint $w = y + z$ into $w = y * 2$.
It enables the propagation step to reduce the domain of $y$ to $\bot$, effectively detecting the problem unsatisfiable.
At that point, the constraint network is $\langle d, \{w = y * 2\} \rangle$ and the fixpoint of $\mathit{preprocess}$ is reached.
The entailment checking step removes the constraint $w = y * 2$ as $\gamma(d) = \{\}$.
In this situation, the problem has been completely solved, and detected unsatisfiable, by preprocessing.


\paragraph*{Common Subexpression Elimination}

We eliminate the TCN subexpressions that are identical by iterating over each pair of constraints in $C$.

\begin{displaymath}
\begin{array}{l}
\mathit{icse}(d, \{\}, E) = \langle d, \{\}, E \rangle \\
\hfill \text{(\textit{base cases $|C| \leq 1$})}\\
\mathit{icse}(d, \{c\}, E) = \langle d, \{c\}, E \rangle\\[0.2cm]

\mathit{icse}(d, \{x_1 = (y_1 \odot_1 z_1), c_2, \ldots, c_n\}, E) = \\
\quad \lett \langle d, C, E \rangle = \mathit{icse}(d, \{c_2, \ldots, c_n\}, E) \linn \\
\quad \mathbf{if}~\exists{(x_i = (y_i \odot_i z_i)) \in C},~\odot_1 = \odot_i \\
\quad \land ((y_1 = y_i \land z_1 = z_i) \lor (\odot_1 \in \mathbf{C} \land y_1 = z_i \land z_1 = y_i))\\
\quad \mathbf{then} \\
\hfill \text{(\textit{detecting subexpression equality})}\\
\qquad \langle d, C, \mathit{merge}(E, x_1, x_i) \rangle \\
\hfill \text{(\textit{$c_1$ is removed and an equality is added})}\\
\quad \mathbf{else}\\
\qquad \langle d, C \cup \{x_1 = (y_1 \odot_1 z_1)\}, E \rangle \\
\end{array}
\end{displaymath}
\noindent
Note that we implement this algorithm efficiently in $\mathcal{O}(|C|)$ by using a hash map between $y \odot z$ and $x$ for each constraint $x = (y \odot z)$.
To account for commutativity, the hashing function must give the same result for $y \odot z$ and $z \odot y$, which can be done by simply multiplying the indexes of the variables and the operator.
The equality function of the hash map must compare the elements taking into account commutativity as shown in the algorithm above.

\paragraph*{Algebraic Simplification}

We present a few rewriting rules that are not taken into account when performing interval propagation, especially because propagation does not deal well with multiple occurrences of the same variable in a constraint.
The function $\mathit{as}$ is essentially a pattern matching algorithm over the constraints.
We write $\mathbf{C} =\{+,*,\min,\max,=\}$ the set of commutative operators.
To simplify the notation, we rewrite each constraint in a normal form as follows:
\begin{displaymath}
\begin{array}{l}
\mathit{nf}(E, d, x = y \odot z) = \\
\quad \lett y, z = \tif{\odot \in \mathbf{C} \land y \leq z}{(y,z)}{(z,y)} \linn \\
\hfill \text{(\textit{normalize commutative operators})} \\
\quad \lett y, z = \tif{\odot \in \mathbf{C} \land d_E(y) = [k,k]}{(z,y)}{(y,z)} \linn \\
\hfill \text{(\textit{constant on the right of }}\odot\text{)} \\
\quad \lett x, y, z = \min~[x]_E, \min~[y]_E, \min~[z]_E \linn \\
\hfill \text{(\textit{variable substitution})}\\
\quad (x = y \odot z)
\end{array}
\end{displaymath}
Furthermore, an integer $\mathbf{k}$ in bold font denotes a variable $x$ such that $d(x) = [k,k]$.

\begin{displaymath}
\begin{array}{l}
\mathit{as}(d, \{\}, E) = \langle d, \{\}, E \rangle \\[0.2cm]
\mathit{as}(d, \{c_1, c_2, \ldots, c_n\}, E) = \\
\quad \lett d, \mathbf{2} = \mathit{extend}_{\mathit{co}}(d, 2) \linn \\
\quad \lett d, C, E = \mathit{as}(d, \{c_2, \ldots, c_n\}) \linn \\
\quad \mathbf{match}~\mathit{nf}(c_1)~\mathbf{with}\\
\quad |~x = x + y \to \langle \mathit{update}(d,y,[0,0]), C, E \rangle \\
\quad |~x = y + \mathbf{0} \to \langle d, C, \mathit{merge}(E, x, y) \rangle \\
\quad |~x = y + y \to \langle d, C \cup \{x = y * \mathbf{2}\}, E \rangle \\
\quad |~x = x * \mathbf{k} \to \\
\qquad\quad\tif{k = 1}{\langle d, C, E \rangle}{\langle \mathit{update}(d,x,[0,0]), C, E \rangle} \\
\quad |~\mathbf{k} = x * x \to \\
\qquad\quad \tif{\exists{n\in\mathbb{N},~n*n = k}\\ \qquad\quad }
  {\langle \mathit{update}(d,x,[-\sqrt{k},\sqrt{k}]), C \cup \{c_1\}, E \rangle\\ \qquad\quad}
  {\langle \mathit{update}(d,x,\bot), C, E \rangle}\\
\quad |~x = y * \mathbf{1} \to \langle d, C, \mathit{merge}(E, x, y) \rangle  \\
\quad |~x = x * x \to \langle \mathit{update}(d,x,[0,1]), C, E \rangle \\
\quad |~x = \mathbf{1} / x \to \langle \mathit{update}(d,x,[-1,1]), C \cup \{c_1\}, E \rangle \\
\quad |~x = \mathbf{0} / x \to \langle \mathit{update}(d,x,\bot), C, E \rangle \\
\quad |~\mathbf{k} = x / x \to \\ \qquad \quad
  \tif{k = 1 \\ \qquad \quad }
  {\langle d, C \cup \{c_1\}, E \rangle\\ \qquad\quad }
  {\langle \mathit{update}(d,x,\bot), C, E \rangle} \\
\quad |~x = y / \mathbf{1} \to \langle d, C, \mathit{merge}(E, x, y) \rangle \\
\quad |~x = x / x \to \langle \mathit{update}(d,x,[1,1]), C, E \rangle \\
\quad |~x = x~\mathit{mod}~x \to \langle \mathit{update}(d,x,[0,0]), C, E \rangle \\ 
\quad |~x = x~\mathit{mod}~\mathbf{k} \to \\
\qquad\quad \langle \mathit{update}(d,x,[0, \mathit{abs}(k)-1]), C, E \rangle \\
\quad |~x = \mathbf{k}~\mathit{mod}~x \to \langle \mathit{update}(d,x,\bot), C, E \rangle \\
\quad |~\mathbf{0} = x~\mathit{mod}~x \to \langle d, C, E \rangle \\
\quad |~x = \min(y, y) \to \langle d, C, \mathit{merge}(E, x, y) \rangle \\
\quad |~x = \max(y, y) \to \langle d, C, \mathit{merge}(E, x, y) \rangle \\
\quad |~x = \min(x, y) \to \langle d, C \cup \{\mathbf{1} = (x \leq y)\}, E \rangle \\
\quad |~x = \max(x, y) \to \langle d, C \cup \{\mathbf{1} = (y \leq x)\}, E \rangle \\

\quad |~\mathbf{1} = (x = y) \to \langle d, C, \mathit{merge}(E, x, y) \rangle \\
\quad |~x = (y = y) \to \langle \mathit{update}(d,x,[1,1]), C, E \rangle \\
\quad |~x = (x = \mathbf{k}) \to \langle \tifbegin{k = 0}{\mathit{update}(d,x,\bot)}\\
\qquad\qquad\qquad\qquad\quad\;\;\; \tifcase{k = 1}{\mathit{update}(d,x,[1,1])}\\
\qquad\qquad\qquad\qquad\quad\;\;\; \tifend{\mathit{update}(d,x,[0,0])}, C, E \rangle\\
\qquad |~x = (y \leq y) \to \langle \mathit{update}(d,x,[1,1]), C, E \rangle \\
\qquad |~x = (x \leq \mathbf{k}) \to \langle \tifbegin{k = 0}{\mathit{update}(d,x,\bot)} \\
\qquad\qquad\qquad\qquad\quad\;\;\; \tifcase{k > 0}{\mathit{update}(d,x,[1,1])} \\
\qquad\qquad\qquad\qquad\quad\;\;\; \tifend{\mathit{update}(d,x,[0,0])}, C, E \rangle \\

\qquad |~x = (\mathbf{k} \leq x) \to \langle \tifbegin{k = 1}{d} \\
\qquad\qquad\qquad\qquad\quad\;\;\; \tifcase{k < 1}{\mathit{update}(d,x,[1,1])} \\
\qquad\qquad\qquad\qquad\quad\;\;\; \tifend{\mathit{update}(d,x,[0,0])}, C, E \rangle \\

\end{array}
\end{displaymath}

There are a few rules we left out because they are already implicitly encoded by constraint propagation over intervals.
It is the case of the absorbing element $0$ for multiplication and division.
Note that there is no need to evaluate constants as this is already taken into account by propagation and the removal of entailed constraints.

\section{Analysis of Ternary Constraint Networks}

\begin{table}
\begin{tabular}{lllllllll}
\toprule
  \multicolumn{1}{l}{} & \multicolumn{4}{c}{\textbf{Variables}} & \multicolumn{4}{c}{\textbf{Constraints}} \\
\multicolumn{1}{l}{} & \textbf{average} & \textbf{median} & \textbf{stddev} & \textbf{max} & \textbf{average} & \textbf{median} & \textbf{stddev} & \textbf{max} \\
\midrule
FlatZinc & 9.42x & 1.86x & 18.63x & 111.62x & 24.94x & 2.95x & 67.91x & 486.87x \\
TCN & 53.61x & 7.97x & 151.61x & 1133.22x & 76.68x & 6.21x & 265.88x & 1837.19x \\
Preprocessed & 22.06x & 4.76x & 50.48x & 344.62x & 36.39x & 4.33x & 115.18x & 746.09x \\
\bottomrule
\end{tabular}
\caption{Increase in size of constraint networks relatively to Choco constraint networks over 89 instances.}
\label{size-tcn}
\end{table}

\begin{figure*}
    \centering
    \includegraphics[scale=0.6]{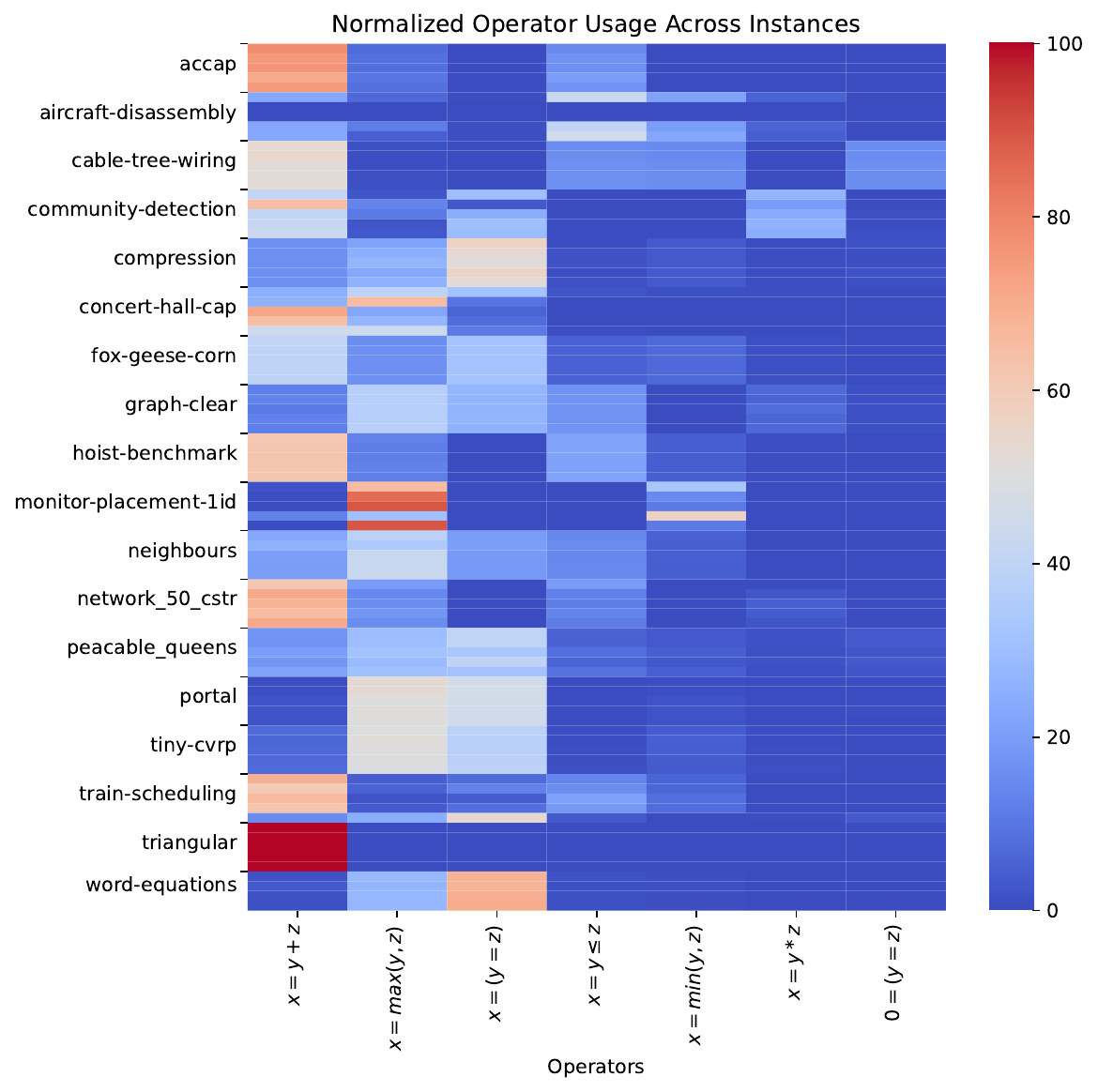}
    \caption{Heatmap of the operators used in preprocessed ternary constraint networks over 89 instances grouped by problem classes.}
    \label{ops-heatmap}
\end{figure*}

\begin{figure*}
  \centering
  \includegraphics[scale=0.7]{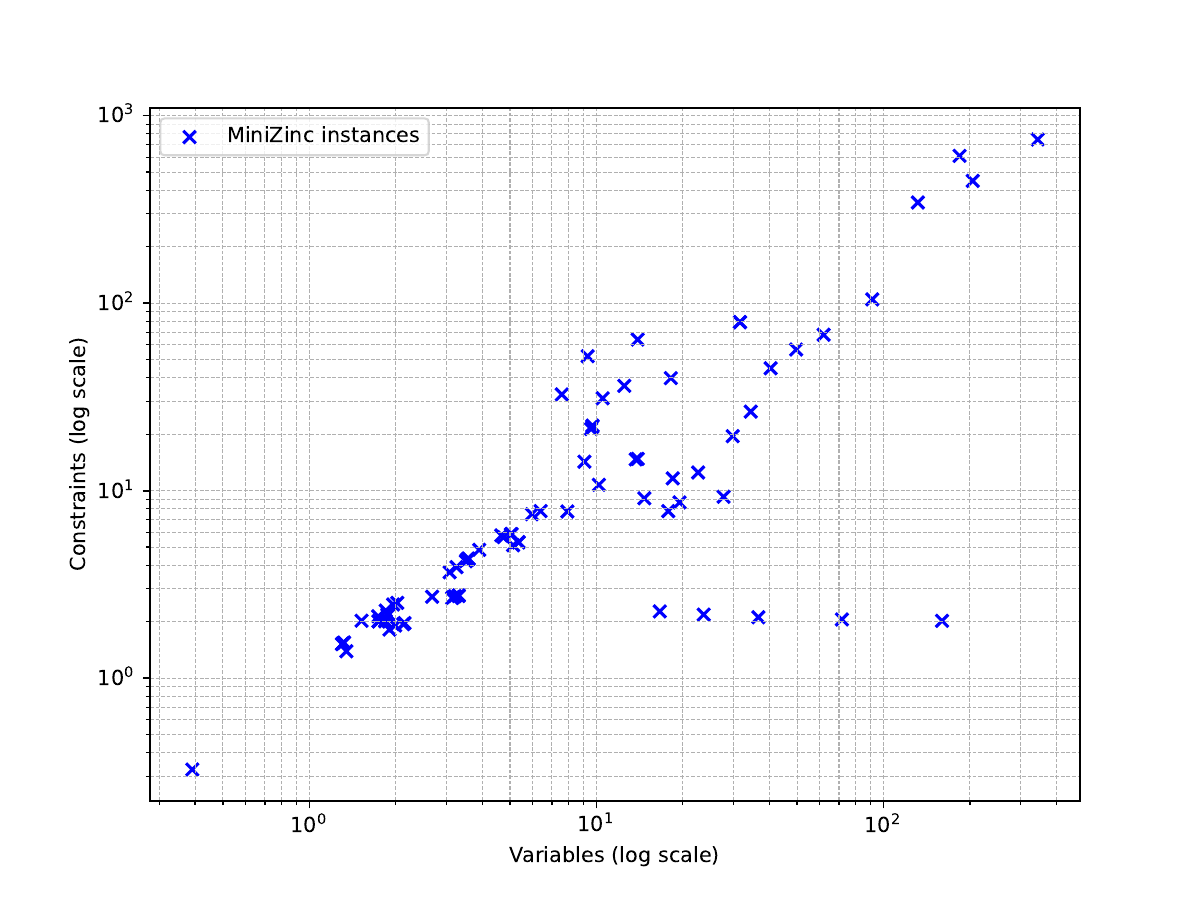}
  \caption{Increase in size of preprocessed ternary constraint networks relatively to Choco constraint networks (95 instances).}
  \label{tnf-increase-scatter-plot}
\end{figure*}

As we do not support any global constraint, the MiniZinc to FlatZinc conversion is already quite costly.
Table~\ref{size-tcn} shows the average, median, standard deviation and maximum increase in the numbers of variables and constraints for fully decomposed MiniZinc model (without global constraints), after decomposition into TCN (Section~\ref{ternary-constraint-network}) and after preprocessing (Section~\ref{preprocessing}).
On 63/89 instances, the FlatZinc decomposition is less than an order of magnitude larger than the Choco constraint network in both the number of variables and constraints.
After applying the TCN decomposition, we further increase by 5 times the number of variables and 3 times the number of constraints on average.
Fortunately, the preprocessing step reduces this increase to 2.5 times for variables and 1.5 times for constraints on average, which is a reasonable increase considering the constraint network only contains ternary constraints.
For a more precise overview, we provide a scatter plot of the variables and constraints increases for all instances in Figure~\ref{tnf-increase-scatter-plot}.
The average preprocessing time is 24.22s with a standard deviation of 96.91s.
The median time is 0.91s and only 11 instances take more than 10 seconds to be preprocessed.

On Figure~\ref{ops-heatmap}, we analyze the usage of the operators across instances.
Clearly, thread divergence is not be the main concern anymore since the instances are not using a wide variety of operators.
In particular, there is no instance with modulo and division operations (or they are simplified after preprocessing).
Because of the preprocessing, all non-reified equality constraints $1 = (x = y)$ are deleted.
The constraint $x = y \leq z$ only exists in a reified context and $0 = y \leq z$ (modelling $y > z$) and $1 = y \leq z$ have completely disappeared.
It is a consequence of the FlatZinc and TCN decomposition, $y \leq z$ is rewritten $y - z \leq 0$ by the FlatZinc decomposition, which is further rewritten $y = x + z \land x \leq 0$ leading to $\leq$ and $>$ being directly managed in the domain of the variables.
The most used TNF constraints are addition, maximum which is used to encode disjunction, and reified equality.
Interestingly, although present in 5 problem classes, the decomposition of all-different into $\mathcal{O}(n^2)$ constraints of the form $0 = (x = y)$ does not seem to be a bottleneck.



\bibliography{tcn}

\begin{thebibliography}{10}

\bibitem{apt-essence-1999}
Krzysztof~R. Apt.
\newblock The essence of constraint propagation.
\newblock {\em Theoretical computer science}, 221(1-2):179--210, 1999.
\newblock URL:
  \url{http://www.sciencedirect.com/science/article/pii/S0304397599000328},
  \href {https://doi.org/10.1016/S0304-3975(99)00032-8}
  {\path{doi:10.1016/S0304-3975(99)00032-8}}.

\bibitem{hull-box-consistency-1999}
Fr\'{e}d\'{e}ric Benhamou, Fr\'{e}d\'{e}ric Goualard, Laurent Granvilliers, and
  Jean-Fran\c{c}ois Puget.
\newblock Revising hull and box consistency.
\newblock In {\em Proceedings of the 1999 International Conference on Logic
  Programming}, page 230–244, USA, 1999. Massachusetts Institute of
  Technology.

\bibitem{benhamou-real-1997}
Frédéric Benhamou and William~J. Older.
\newblock Applying interval arithmetic to real, integer, and boolean
  constraints.
\newblock {\em The Journal of Logic Programming}, 32(1):1 -- 24, 1997.
\newblock URL:
  \url{http://www.sciencedirect.com/science/article/pii/S0743106696001422},
  \href {https://doi.org/10.1016/S0743-1066(96)00142-2}
  {\path{doi:10.1016/S0743-1066(96)00142-2}}.

\bibitem{carlsson-indexicals-1997}
Mats Carlsson, Greger Ottosson, and Björn Carlson.
\newblock An open-ended finite domain constraint solver.
\newblock {\em Programming Languages: Implementations, Logics, and Programs},
  1997.

\bibitem{codognet-compiling-1996}
Philippe Codognet and Daniel Diaz.
\newblock Compiling constraints in clp(fd).
\newblock {\em The Journal of Logic Programming}, 27(3):185 -- 226, 1996.
\newblock URL:
  \url{http://www.sciencedirect.com/science/article/pii/0743106695001212},
  \href {https://doi.org/10.1016/0743-1066(95)00121-2}
  {\path{doi:10.1016/0743-1066(95)00121-2}}.

\bibitem{correia-view-based-2013}
Marco Correia and Pedro Barahona.
\newblock View-based propagation of decomposable constraints.
\newblock {\em Constraints}, 18(4):579--608, October 2013.
\newblock URL: \url{http://link.springer.com/10.1007/s10601-013-9140-8}, \href
  {https://doi.org/10.1007/s10601-013-9140-8}
  {\path{doi:10.1007/s10601-013-9140-8}}.

\bibitem{lecoutre-constraint-2009}
Christophe Lecoutre.
\newblock {\em Constraint networks: techniques and algorithms}.
\newblock ISTE/John Wiley, Hoboken, NJ, 2009.

\bibitem{nethercote-minizinc:-2007}
Nicholas Nethercote, Peter~J. Stuckey, Ralph Becket, Sebastian Brand,
  Gregory~J. Duck, and Guido Tack.
\newblock {MiniZinc}: Towards a standard {CP} modelling language.
\newblock In {\em Principles and Practice of Constraint Programming---{CP}
  2007}, pages 529--543. Springer, 2007.
\newblock URL:
  \url{http://link.springer.com/chapter/10.1007/978-3-540-74970-7_38}.

\bibitem{nightingale-automatically-2017}
Peter Nightingale, Özgür Akgün, Ian~P. Gent, Christopher Jefferson, Ian
  Miguel, and Patrick Spracklen.
\newblock Automatically improving constraint models in {Savile} {Row}.
\newblock {\em Artificial Intelligence}, 251:35--61, October 2017.
\newblock URL:
  \url{https://linkinghub.elsevier.com/retrieve/pii/S0004370217300747}, \href
  {https://doi.org/10.1016/j.artint.2017.07.001}
  {\path{doi:10.1016/j.artint.2017.07.001}}.

\bibitem{rendl-phd-2010}
Andrea Rendl.
\newblock {\em Effective compilation of constraint models}.
\newblock {PhD} {Thesis}, University of St Andrews, 2010.

\bibitem{schulte-efficient-2008}
Christian Schulte and Peter~J. Stuckey.
\newblock Efficient {Constraint} {Propagation} {Engines}.
\newblock {\em ACM Trans. Program. Lang. Syst.}, 31(1):2:1--2:43, December
  2008.
\newblock URL: \url{http://doi.acm.org/10.1145/1452044.1452046}, \href
  {https://doi.org/10.1145/1452044.1452046}
  {\path{doi:10.1145/1452044.1452046}}.

\bibitem{schulte-view-based-2013}
Christian Schulte and Guido Tack.
\newblock View-based propagator derivation.
\newblock {\em Constraints}, 18(1):75--107, January 2013.
\newblock URL: \url{http://link.springer.com/10.1007/s10601-012-9133-z}, \href
  {https://doi.org/10.1007/s10601-012-9133-z}
  {\path{doi:10.1007/s10601-012-9133-z}}.

\bibitem{stuckey-minizinc-2014}
Peter~J. Stuckey, Thibaut Feydy, Andreas Schutt, Guido Tack, and Julien
  Fischer.
\newblock The {MiniZinc} challenge 2008–2013.
\newblock {\em AI Magazine}, 35(2):55--60, 2014.
\newblock URL:
  \url{https://vvvvw.aaai.org/ojs/index.php/aimagazine/article/view/2539}.

\bibitem{propagation-guido-tack-2009}
Guido Tack.
\newblock {\em Constraint {P}ropagation -- {M}odels, {T}echniques,
  {I}mplementation}.
\newblock PhD thesis, Saarland University, 2009.

\bibitem{AAAI2022-Talbot}
Pierre Talbot, Fr\'{e}d\'{e}ric Pinel, and Pascal Bouvry.
\newblock {A Variant of Concurrent Constraint Programming on GPU}.
\newblock In {\em {Proceedings of the AAAI Conference on Artificial
  Intelligence}}, volume~36, pages 3830--3839, Jun. 2022.
\newblock \href {https://doi.org/10.1609/aaai.v36i4.20298}
  {\path{doi:10.1609/aaai.v36i4.20298}}.

\bibitem{van-hentenryck-constraint-chip-1989}
Pascal Van~Hentenryck.
\newblock {\em Constraint Satisfaction in Logic Programming}.
\newblock {MIT} Press, 1989.

\bibitem{cc-fd-impl-1998}
Pascal Van~Hentenryck, Vijay Saraswat, and Yves Deville.
\newblock Design, implementation, and evaluation of the constraint language
  cc({FD}).
\newblock {\em The Journal of Logic Programming}, 37(1--3):139--164, 1998.
\newblock URL:
  \url{http://www.sciencedirect.com/science/article/pii/S0743106698100067},
  \href {https://doi.org/http://dx.doi.org/10.1016/S0743-1066(98)10006-7}
  {\path{doi:http://dx.doi.org/10.1016/S0743-1066(98)10006-7}}.

\end{thebibliography}

\end{document}